\numberwithin{equation}{section}
\newtheorem{theorem}{Theorem}[section]
\newtheorem{lemma}[theorem]{Lemma}
\newtheorem{proposition}[theorem]{Proposition}
\newtheorem{definition}[theorem]{Definition}
\newtheorem{corollary}[theorem]{Corollary}
\newtheorem{example}[theorem]{Example}
\newtheorem{assumption}[theorem]{Assumption}
\newtheorem{remark}[theorem]{Remark}
\newtheorem{problems}[section]{Problem}
\def\lbeq(#1){\label{eqn:#1}}
\def\refeq(#1){{\rm (\ref{eqn:#1})}}
\def\lbth(#1){\label{th:#1}}
\def\lbsec(#1){\label{sec:#1}}
\def\refsec(#1){{\rm Section \ref{sec:#1}}}
\def\refsecs(#1,#2){{\rm Sections \ref{sec:#1} and \ref{sec:#2}}}
\def\refth(#1){{\rm Theorem \ref{th:#1}}}
\def\refths(#1,#2){{\rm Theorems \ref{th:#1} and \ref{th:#2}}}
\def\refthb(#1){{\bf Theorem \ref{th:#1}}}
\def\lbprp(#1){\label{prp:#1}}
\def\refprp(#1){{\rm Proposition \ref{prp:#1}}}
\def\lblm(#1){\label{lm:#1}}
\def\reflm(#1){{\rm Lemma  \ref{lm:#1}}}
\def\reflms(#1,#2){{\rm Lemmas \ref{lm:#1} and \ref{lm:#2}}}
\def\lbcor(#1){\label{cor:#1}}
\def\refcor(#1){{\rm Corollary \ref{cor:#1}}}
\def\lbrm(#1){\label{rm:#1}}
\def\refrm(#1){{\rm Remark \ref{rm:#1}}}
\def\lbexm(#1){\label{exm:#1}}
\def\refexm(#1){{\rm Example \ref{rm:#1}}}
\def\lbass(#1){\label{ass:#1}}
\def\refass(#1){{\rm Assumption \ref{ass:#1}}}
\def\refasss(#1, #2){{\rm Assumptions \ref{ass:#1} and \ref{ass:#2}}} 
\def\lbdf(#1){\label{df:#1}}
\def\refdf(#1){{\rm Definition \ref{df:#1}}}
\def\bgdf{\begin{definition}}
\def\eddf{\end{definition}}
\def\bgth{\begin{theorem}}
\def\edth{\end{theorem}}
\def\bglm{\begin{lemma}}
\def\edlm{\end{lemma}}
\def\bgprp{\begin{proposition}}
\def\edprp{\end{proposition}}
\def\bgcor{\begin{corollary}}
\def\edcor{\end{corollary}}
\def\bgexm{\begin{example}}
\def\edexm{\end{example}}
\def\bgpf{\begin{proof}}
\def\edpf{\end{proof}}
\def\bgrm{\begin{remark}}
\def\edrm{\end{remark}}
\def\bgrms{\begin{remarks}}
\def\edrms{\end{remarks}}
\def\bgass{\begin{assumption}}
\def\edass{\end{assumption}}
\def\ben{\begin{enumerate}}
\def\een{\end{enumerate}}
\def\bgeq{\begin{equation}}
\def\edeq{\end{equation}}
\def\bgds{\begin{description}}
\def\edds{\end{description}}
\def\bgpbs{\begin{problems}}
\def\edpbs{\end{problems}}
\def\ben{\begin{enumerate}}
\def\een{\end{enumerate}}
\def\bqn{\begin{equation}}
\def\eqn{\end{equation}}
\def\bqna{\begin{eqnarray}}
\def\eqna{\end{eqnarray}}
\def\bgmul{\begin{multline*}}
\def\edmul{\end{multline*}}
\def\br{\begin{array}}
\def\er{\end{array}}
\def\tL={{\tilde{L}}}
\begin{document}
%%%%%%行間をあける命令%%%%%%%\setlength{\baselineskip}{45pt}
\allowdisplaybreaks

\title{Absence of zero resonances of massless Dirac operators}
\author{Daisuke AIBA\footnote{
Department of Mathematics, Gakushuin University, 
1-5-1 Mejiro, Toshima-ku, Tokyo 171-8588, Japan.
E-mail: aiba@math.gakushuin.ac.jp.
Partially supported by F\^{u}j\^{u}-kai Foundation. }
}
\date{}

\allowdisplaybreaks
\maketitle

\begin{abstract} 
We consider the massless Dirac operator $ H = \alpha \cdot D + Q(x) $
on the Hilbert space $ L^{2}( \mathbb{R}^{3}, \mathbb{C}^{4} ) $,
where $ Q(x) $ is a $ 4 \times 4 $ Hermitian matrix valued function
which suitably decays at infinity.
We show that the the zero resonance is absent for $H$,
extending recent results of Sait\={o} - Umeda \cite{YT} and Zhong - Gao \cite{YG}.
\end{abstract}

\section{Introduction, assumption and theorems.}%%%%%%%%%%%%% Intriduction %%%%%%%%%%%%%
We consider the massless Dirac operator
\begin{equation} \label{massless}
H = \alpha \cdot D + Q(x),\ D = -i \nabla_{x},\ x \in \mathbb{R}^{3},
\end{equation}
acting on $ \mathbb{C}^{4} $-valued functions on $ \mathbb{R}^{3} $.
Here $ \alpha = ( \alpha_{1}, \alpha_{2}, \alpha_{3} ) $ is the triple of $ 4 \times 4 $ Dirac matrices:
\[
 \alpha_{j} = \left(
     \begin{array}{@{\,}cccc@{\,}}
     \textbf{0} & \sigma_{j}      \\
     \sigma_{j} & \textbf{0}      
     \end{array}
\right) \ j = 1, 2, 3,
\]
with the $ 2 \times 2 $ zero matrix \textbf{0} and the triple of $ 2 \times 2 $ Pauli matrices:
\[
\sigma_{1} = \left(
     \begin{array}{@{\,}cccc@{\,}}
     0 & 1      \\
     1 & 0      
     \end{array}
\right),\ 
\sigma_{2} = \left(
     \begin{array}{@{\,}cccc@{\,}}
     0 & -i      \\
     i & 0      
     \end{array}
\right),\ 
\sigma_{3} = \left(
     \begin{array}{@{\,}cccc@{\,}}
     1 & 0     \\
     0 & -1      
     \end{array}
\right),
\]
and we use the vector notation that
$( \alpha \cdot D)u = \sum_{j=1}^{3} \alpha_{j} (-i\partial_{x_{j}})u $.
We assume that $ Q(x) $ is a $4\times4$ Hermitian matrix valued
function satisfying the following assumption:

\begin{assumption} \label{ass}
There exists positive constant $C$ and $ \rho >1 $ such that,
for each component $ q_{jk}(x)\ ( j, k = 1, \cdot \cdot \cdot, 4 ) $ of $Q(x)$,
\[
| q_{jk}(x) | \le C \langle x \rangle^{- \rho},\ \ x \in \mathbb{R}^{3}.
\]
where $ \langle x \rangle = ( 1 + |x|^{2} )^{\frac{1}{2}} $.
\end{assumption}

We remark that the Dirac operator for a Dirac particle minimally coupled
to the electromagnetic field described by the potential $ (q, A) $ is given by
\begin{equation} \label{eD}
\alpha \cdot ( D - A(x) ) + q(x)I_{4},
\end{equation}
where $ I_{4} $ is the $ 4 \times 4 $ identity matrix, and
is a special case of (\ref{ass}).
And, if $ q(x) = 0 $, (\ref{eD}) reduces to
\[
\alpha \cdot ( D - A(x) ) = \left(
     \begin{array}{@{\,}cccc@{\,}}
     \textbf{0} & \sigma \cdot ( D - A(x) )      \\
     \sigma \cdot ( D - A(x) ) & \textbf{0}      
     \end{array}
\right),
\]
where $ \sigma \cdot ( D - A(x) ) $ is the Weyl-Dirac operator.

To state the result of the paper,
we introduce some notation and terminology.%%%%%%%%% notation terminology %%%%%%%%%%
$ \mathcal{F} $ is the Fourier transform:
\[
(\mathcal{F}f)(\xi) = \frac{1}{(2\pi)^{3/2}} \int_{\mathbb{R}^{3}} f(\xi) e^{-ix \cdot \xi} d\xi.
\]
We often write $ \hat{f}(\xi) = (\mathcal{F}f)(\xi) $ and $ \check{f}(\xi) = (\mathcal{F}^{-1}f)(\xi) $.
$ \mathcal{L}^{2}(\mathbb{R}^{3}) = L^{2}( \mathbb{R}^{3}, \mathbb{C}^{4} ) $ is the Hilbert space of all $ \mathbb{C}^{4} $-valued square integrable functions.
For $ s \in \mathbb{R} $, $ \mathcal{L}^{2, s}(\mathbb{R}^{3}) = L^{2, s}( \mathbb{R}^{3}, \mathbb{C}^{4} ) := \langle x \rangle^{-s} L^{2}( \mathbb{R}^{3}, \mathbb{C}^{4} ) $ is the weighted $ \mathcal{L}^{2}(\mathbb{R}^{3}) $ space.
$ \mathcal{S}^{'}(\mathbb{R}^{3}) = \mathcal{S}^{'}( \mathbb{R}^{3}, \mathbb{C}^{4} ) $ is the space of
$ \mathbb{C}^{4} $-valued tempered distributions.
$ \mathcal{H}^{s}(\mathbb{R}^{3}) = H^{s}( \mathbb{R}^{3}, \mathbb{C}^{4} ) $
is the Sobolev space of order $s$:
\[
\mathcal{H}^{s}( \mathbb{R}^{3}) = \{ f \in \mathcal{S}^{'} (\mathbb{R}^{3}) | \hat{f} \in \mathcal{L}^{2, s} (\mathbb{R}^{3}) \}
\]
with the inner product
$ ( f, g )_{\mathcal{H}^{s}} = \sum_{j=1}^{4} ( \hat{f_{j}}, \hat{g_{j}} )_{L^{2, s}} $.
The spaces $ \mathcal{H}^{-s}(\mathbb{R}^{3}) $ and $ \mathcal{H}^{s}(\mathbb{R}^{3}) $
are duals of each other with respect to the coupling
\[
\langle f, g \rangle := \sum_{j=1}^{4}
\int_{\mathbb{R}^{3}} ( \mathcal{F}f_{j} )(\xi) \overline{ ( \mathcal{F}g_{j}(\xi) ) }d\xi,
\ \ f \in \mathcal{H}^{-s}(\mathbb{R}^{3}), g \in \mathcal{H}^{s}(\mathbb{R}^{3}).
\]
For Hilbert spaces $ \mathcal{X} $ and $ \mathcal{Y} $, $ B( \mathcal{X}, \mathcal{Y} ) $ stands for the Banach space of 
bounded operators from $ \mathcal{X} $ to $ \mathcal{Y} $,
$ B(\mathcal{X}) = B( \mathcal{X}, \mathcal{X} ) $.
%%%%%%%%%%%%%%%%%%%%%% End notation terminology %%%%%%%%%%%%%%%%%%%%%%%%%%%%%%%

It is well known that the free Dirac operator $ H_{0} := \alpha \cdot D $
is self-adjoint in $ \mathcal{L}^{2}(\mathbb{R}^{3}) $
with domain $ \mathcal{D}(H_{0}) = \mathcal{H}^{1}(\mathbb{R}^{3}) $.
Hence by the Kato-Rellich theorem,
$ H $ is also self-adjoint in $ \mathcal{L}^{2}(\mathbb{R}^{3}) $
with domain $ \mathcal{D}(H) = \mathcal{D}(H_{0}) $.
We denote their self-adjoint realizations again by $ H_{0} $ and $ H $ respectively.
In what follows,
we write $H_{0}f$ also for $(\alpha \cdot D)f$ when $f \in \mathcal{S}^{'}(\mathbb{R}^{3})$.

\begin{definition}%%%%%%%%%%% definition of zero mode and zero resonance %%%%%%%%%%%%%%%
If $ f \in \mathcal{L}^{2}(\mathbb{R}^{3}) $ satisfies $ Hf = 0 $,
we say $ f $ is a zero mode of $ H $.
If $ f \in \mathcal{L}^{2, -3/2}(\mathbb{R}^{3}) $
satisfies $ Hf = 0 $ in the sense of distributions,
but $ f \notin \mathcal{L}^{2}(\mathbb{R}^{3}) $,
then $ f $ is said to be a zero resonance state and
zero is a resonance of it.
\end{definition}%%%%%%%%%%% End definition %%%%%%%%%%%%%%%%%

The following is the main result of this paper.
\begin{theorem} \label{main}    %%%%%%%%%%%%%%% Main Theorem %%%%%%%%%%%%%%%%%%
Let $Q(x)$ satisfy Assumption \ref{ass}.
Suppose $ f \in \mathcal{L}^{2, -3/2}(\mathbb{R}^{3}) $ satisfies $ Hf = 0 $
in the sense of distributions,
then for any $ \mu < 1/2 $, we have $ \langle x \rangle^{\mu}f \in \mathcal{H}^{1}(\mathbb{R}^{3}) $.
In particular, there are no resonance for $H$.
\end{theorem}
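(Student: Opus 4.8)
The plan is to run a bootstrap argument in the Fourier picture, exploiting the fact that on the Fourier side $H_0 = \alpha\cdot D$ becomes multiplication by the matrix $\alpha\cdot\xi$, which satisfies $(\alpha\cdot\xi)^2 = |\xi|^2 I_4$. First I would rewrite the equation $Hf=0$ as $(\alpha\cdot D)f = -Qf$, so that $\hat f$ solves $(\alpha\cdot\xi)\hat f(\xi) = -\widehat{Qf}(\xi)$. Since $Q$ decays like $\langle x\rangle^{-\rho}$ with $\rho>1$ and $f\in\mathcal{L}^{2,-3/2}$, the product $Qf$ gains weight: $Qf\in\mathcal{L}^{2,\rho-3/2}$, which for $\rho>1$ places $\widehat{Qf}$ in $\mathcal{H}^{\rho-3/2}$ with $\rho-3/2 > -1/2$. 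Multiplying the equation by $\alpha\cdot\xi$ and using $(\alpha\cdot\xi)^2=|\xi|^2$ gives $|\xi|^2\hat f(\xi) = -(\alpha\cdot\xi)\widehat{Qf}(\xi)$, hence $\hat f(\xi) = -|\xi|^{-2}(\alpha\cdot\xi)\widehat{Qf}(\xi) = -|\xi|^{-1}\omega\cdot\alpha\,\widehat{Qf}(\xi)$ with $\omega=\xi/|\xi|$. The singular factor $|\xi|^{-1}$ at the origin is what limits the regularity gain, and controlling it is the heart of the matter.

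The key technical step is a Hardy-type / weighted estimate: I would show that the operator $g \mapsto \mathcal{F}^{-1}\big(|\xi|^{-1}(\omega\cdot\alpha)\hat g\big)$ maps $\mathcal{L}^{2,s}$ into $\mathcal{L}^{2,s-1}$ (equivalently $\mathcal{H}^{s}$-type control improves by the gradient's worth of decay minus the loss from the $|\xi|^{-1}$ singularity), provided the relevant weights stay away from the critical exponents $\pm 1/2$ and $\pm 3/2$ dictated by the dimension $n=3$. Concretely, writing $u = \langle x\rangle^\mu f$, one derives from the formula above an equation of the form $u = R\,(\langle x\rangle^\mu Q \langle x\rangle^{-\mu'} )\,(\langle x\rangle^{\mu'} f) + (\text{commutator terms})$, where $R$ is the $|\xi|^{-1}$ Riesz-type operator; the commutator of $\langle x\rangle^\mu$ with $R$ is lower order and absorbs harmlessly. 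Starting from the a priori hypothesis $f\in\mathcal{L}^{2,-3/2}$ and iterating — each step trading one power of $\langle x\rangle^{\rho}$ decay from $Q$ against one unit of weight, while $R$ costs nothing in integrability but forbids crossing the weight thresholds — one climbs from weight $-3/2$ up toward, but not across, $\mu=1/2$; the regularity statement $\langle x\rangle^\mu f\in\mathcal{H}^1$ follows because once $\hat f$ is in a good weighted $L^2$ space away from the origin singularity, the two extra powers of $\xi$ gained from $|\xi|^2\hat f = -(\alpha\cdot\xi)\widehat{Qf}$ put $\langle\xi\rangle\hat f \in L^2$, i.e. $f\in\mathcal{H}^1$ locally, and the weight is carried along.

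I expect the main obstacle to be the behavior at the momentum origin $\xi=0$: the factor $|\xi|^{-1}$ is not locally $L^2$-bounded as a multiplier in a way that preserves all weighted spaces, and the ceiling $\mu<1/2$ is exactly the obstruction coming from the fact that $|\xi|^{-1}$ applied to an $L^2$ function lands in $\mathcal{L}^{2,-1/2-\varepsilon}$ but no better (a Hardy inequality at the sharp exponent). So the bootstrap must be set up carefully to (i) only ever invoke the weighted boundedness of $R$ strictly below the critical weights, using $\rho>1$ to guarantee each increment is a genuine gain, and (ii) handle the endpoint by an $\varepsilon$-room argument rather than trying to reach $\mu=1/2$. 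A secondary point requiring care is that $Q$ is only a decay bound, not smooth, so all estimates must be phrased purely in terms of $\langle x\rangle$-weights and $\mathcal{L}^2$/$\mathcal{H}^s$ mapping properties, never differentiating $Q$. Finally, the conclusion "no resonance'' is immediate: if $f\in\mathcal{L}^{2,-3/2}$ solves $Hf=0$ then $\langle x\rangle^\mu f\in\mathcal{H}^1\subset\mathcal{L}^2$ for some $\mu>0$, hence $f\in\mathcal{L}^2$, so $f$ cannot be a resonance state in the sense of the Definition (it is either a genuine zero mode or zero).
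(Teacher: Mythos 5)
Your overall route coincides with the paper's: your operator $R=\mathcal{F}^{-1}|\xi|^{-1}(\omega\cdot\alpha)\mathcal{F}$ is exactly the paper's integral operator $A$ with kernel $\frac{i}{4\pi}\frac{\alpha\cdot(x-y)}{|x-y|^{3}}$, your Hardy-type weighted estimate with the critical weights is Lemma \ref{A} (proved via the Nirenberg--Walker bound, Theorem \ref{NNWW}), your iteration gaining $\rho-1$ in the weight per step up to but not across $1/2$ is precisely the Agmon bootstrap of Section 3, and your ``commutator terms'' correspond to the paper's explicit identity $H_{0}\langle x\rangle^{\mu}f=-i\mu(\alpha\cdot x)\langle x\rangle^{\mu-2}f+\langle x\rangle^{\mu}H_{0}f$, which gives $\langle x\rangle^{\mu}f\in\mathcal{H}^{1}$ once $f\in\mathcal{L}^{2,\mu}$.

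There is, however, one genuine gap, and it is exactly the point the paper devotes Lemmas \ref{R} and \ref{A2} to. From $(\alpha\cdot\xi)\hat f=-\widehat{Qf}$ you pass to $|\xi|^{2}\hat f=-(\alpha\cdot\xi)\widehat{Qf}$ and then assert ``hence $\hat f=-|\xi|^{-2}(\alpha\cdot\xi)\widehat{Qf}$''. This division is not mere algebra: $\hat f$ is only an element of $\mathcal{H}^{-3/2}$, $\widehat{Qf}$ lies in $\mathcal{H}^{\rho-3/2}$, which is of negative order when $\rho<3/2$, and the symbol $(\alpha\cdot\xi)/|\xi|^{2}$ is neither smooth nor locally bounded at $\xi=0$, so the product on the right-hand side must first be given a meaning; moreover $|\xi|^{2}u=v$ determines $u$ only up to solutions of $|\xi|^{2}u=0$, i.e.\ distributions supported at the origin (spanned by $\delta$ and $\partial_{j}\delta$), which must be ruled out. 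The paper handles this by proving the identity $AH_{0}f=f$ (Lemma \ref{A2}): one tests against $\phi\in C_{0}^{\infty}(\mathbb{R}^{3}\backslash\{0\},\mathbb{C}^{4})$, where the multiplier identity \eqref{QQ} of Lemma \ref{R} is legitimate, and then uses that such $\phi$ are dense in $\mathcal{H}^{s}$ for $s\le 3/2$ together with the fact that both $\mathcal{F}^{-1}f$ and $\mathcal{F}^{-1}(AH_{0}f)$ lie in $\mathcal{H}^{-3/2}$. This is precisely where the borderline a priori class $\mathcal{L}^{2,-3/2}$ --- the main improvement over \cite{YT} and \cite{YG} --- enters, and it works only because $3/2$ is the critical index $d/2$, so that $\mathcal{H}^{-3/2}$ distributions cannot charge the point $\xi=0$. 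Your discussion of the ``obstacle at the origin'' addresses only the weighted mapping thresholds of $R$ (the content of Lemma \ref{A}), not this inversion identity; without it the very first step of your bootstrap, $f=-RQf$, is unjustified. Once that identity is supplied, the rest of your plan goes through as in the paper.
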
%%%%%%%%%%%%% End Main Theorem %%%%%%%%%%%%%%%%%%%

\begin{remark}
The decay result $\langle x \rangle^{\mu}f \in \mathcal{H}^{1}(\mathbb{R}^{3})$,
$\mu<1/2$ cannot be improved.
This can be seen from the example of zero mode of the Weyl-Dirac operator
which was constructed by Loss-Yau \cite{LY}.
Loss and Yau have constructed a vector potential $A_{LY}(x)$
and a zero mode $\phi_{LY}(x)$ satisfying $\sigma \cdot (D-A_{LY}(x))\phi_{LY}=0$,
where
$A_{LY}$ and $\phi_{LY}$ satisfy
$A_{LY}(x) = \mathcal{O}(\langle x \rangle^{-2})$, $|\phi_{LY}(x)| = \langle x \rangle^{-2}$.
Define $ f_{LY} = {}^{t}(0, \phi_{LY} ) $ and $Q(x) = -\alpha \cdot A_{LY}(x)$,
then
\[
Hf_{LY} = (H_{0}+Q)f_{LY}
      = \left(
     \begin{array}{@{\,}cccc@{\,}}
     \textbf{{\rm \textbf{0}}} & \sigma \cdot ( D - A_{LY}(x) )      \\
     \sigma \cdot ( D - A_{LY}(x) ) & \textbf{{\rm \textbf{0}}}      
     \end{array}
\right)f_{LY}=0,
\]
and $f_{LY} \in \mathcal{L}^{2,\mu}(\mathbb{R}^{3})$ for any $\mu < 1/2$.
However,
$f_{LY} \notin \mathcal{L}^{2, \frac{1}{2}}(\mathbb{R}^{3})$.
\end{remark}

We remark that Sait\={o} - Umeda \cite{YT} and Zhong - Gao \cite{YG} have proven
the following result under the same assumption
$ | Q(x) | \le C \langle x \rangle^{-\rho} $, $ \rho>1 $
(In \cite{YT}, it is assumed $ \rho>3/2 $,
however, arguments of \cite{YT} go through under the assumption $ \rho>1 $
as was made explicit in \cite{YG}):
If $f$ satisfies $ f \in \mathcal{L}^{2, -s}(\mathbb{R}^{3}) $ for some
$ 0<s \le {\rm min} \{ 3/2, \rho-1 \} $ and $ Hf = 0 $ in the sense of distributions,
then $ f \in \mathcal{H}^{1}(\mathbb{R}^{3}) $.
Our theorem improves over the results of \cite{YT} and \cite{YG}
by weakening the assumption $ f \in \mathcal{L}^{2, -s}(\mathbb{R}^{3}) $
to $ \mathcal{L}^{2, -3/2}(\mathbb{R}^{3}) $,
which is $ \rho >1 $ independent, and by strengthening
the result $ f \in \mathcal{H}^{1}(\mathbb{R}^{3}) $
to a sharp decay estimate $ \langle x \rangle^{\mu}f  \in \mathcal{H}^{1}(\mathbb{R}^{3}) $,
$ \mu < 1/2 $.
We briefly explain the significance of the theorem.

The solution of the time-dependent Dirac equation
\[
i \frac{\partial u}{\partial t} = Hu,\ \ u(0) = \phi
\]
is given by $ e^{-itH} \phi $.
Under Assumption \ref{ass},
it has been proven that the spectrum $ \sigma(H) = \mathbb{R} $,
the limiting absorption principle is satisfied
and that $ \sigma_{p}(H) \backslash \{0\} $ is discrete.
To make the argument simple,
we assume $ \sigma_{p}(H) \subset \{0\} $.
Then for $ \phi \in \mathcal{L}^{2}_{ac}(H) $, the absolutely continuous spectral subspace
of $ \mathcal{L}^{2} $ for $H$, $ e^{-itH}\phi $ may be represented in terms of the boundary values of the resolvent $ ( H - \lambda \pm i0 )^{-1} $:
\[
e^{-itH}\phi = \lim_{\epsilon \downarrow 0}
\frac{1}{2 \pi i} \int_{\mathbb{R} \backslash (-\epsilon, \epsilon)}
e^{-it\lambda} \{ ( H - \lambda -i0 )^{-1} - ( H - \lambda +i0 )^{-1} \} \phi d\lambda,\ t>0,
\]
and the asymptotic behavior as $ t \to \pm \infty $ of $ e^{-itH}\phi $ depends on whether
(1) $ \lambda = 0 $ is a regular point, viz, $ ( H - ( \lambda \pm i0 ) )^{-1} $
is smooth up to $ \lambda = 0 $,
(2) $ \lambda = 0 $ is a resonance of it,
(3) $ \lambda = 0 $ is an eigenvalue of $ H $ or
(4) $ \lambda = 0 $ is an eigenvalue at the same time is a resonance.
Thus, Theorem \ref{main} eliminates the possibility (2) and (4).
We should recall that if $ m \neq 0 $,
then all four cases mentioned above appear at the threshold points $ \pm m $.
It is well-known that $ \lambda = 0 $ is not a regular point
if $ f + (H_{0} \pm i0 )^{-1}Qf = 0 $ has a non-trivial solution $ f \in \mathcal{L}^{2, -\rho/2} $
and this $ f $ satisfies $ Hf = 0 $.
Note that, to conclude that $ f \in \mathcal{H}^{1} $
by using results of \cite{YT} or \cite{YG},
we need assume $ 0< \rho/2 \le \rm{min} \{ 3/2, \rho-1 \} $ or $ 2 \le \rho \le 3 $,
which is a severe restriction for this application,
whereas Theorem \ref{main} does not impose only such restriction.

%The zero mode and zero resonance plays an important role
%in order to analyze the asymptotic behavior of the resolvent
%$ ( H - z )^{-1} $ as $ z \to 0 $.
%In [\ref{JK}], A. Jensen and T.Kato have shown that
%when the zero mode and zero resonance is present,
%the asymptotic behavior of the resolvent become so complicated
%for the Schr\"{o}dinger operator.
%In this paper, we show that the non-existence of the zero resonance
%for the massless Dirac operator.

%Now we introduce some known results.
%In [\ref{YT}], Y. Sait\={o} and T. Umeda have proved the following theorem:

%\begin{YT}%%%%%%%%%% Saito-Umeda Theorem %%%%%%%%%%%%%%%%
%Suppose Assumption \ref{ass} is satisfied with $ \rho > 3/2 $.
%There exist $ s>0 $ such that $ 0 < s \le {\rm min }\{ 3/2, \rho-1 \} $,
%and if $ f \in \mathcal{L}^{2, -s} $ satisfies $ Hf = 0 $ in the distributional sense,
%then $ f \in \mathcal{H}^{1} $.
%\end{YT}%%%%%%%%%%%% End Saito-Umeda %%%%%%%%%%%%%%%

%\begin{remark}
%This theorem shows that there are no zero resonances of $H$ as far as $ \rho>3/2 $
%and $ 0 < s \le {\rm min }\{ 3/2, \rho-1 \} $.
%\end{remark}

%In [\ref{YG}], Y. Zhong and G. Gao have proved same result
%when Assumption \ref{ass} is satisfied with $ \rho >1 $.

%\begin{remark}
%In [\ref{YT}], [\ref{YG}], the range of $s$ is depend on $\rho>1$,
%in particular, $s$ tends to zero as $ \rho \downarrow 1 $.
%However, our result shows that there are no zero resonances of $H$
%as far as $ \rho >1 $ and $ s = 3/2 $.
%\end{remark}

The rest of the paper is devoted to the proof of Theorem \ref{main}.
In section 2, we prepare some lemmas for proving the main theorem.
In section 3, we prove the main theorem \ref{main}.

%%%%%%%%%%%%%%%%%%% End Introduction %%%%%%%%%%%%%%%%%%%%%%%%%%%

\section{Preliminaries.}  %%%%%%%%%%%%% Section Preliminaries %%%%%%%%%%%%%%%%%%%%
In this section, we prepare some lemmas which are necessary for proving the theorem.
We use the following well-known lemma:
\begin{theorem}\label{NNWW}  \rm{ (Nirenberg - Walker \cite{NW}) }%%%%%%%%%% Lemma Nirenberg-Walker %%%%%%%%%%%%%%%%
Let $ 1<p< \infty $ and let $ a, b \in \mathbb{R} $ be such that $ a + b>0 $.
Define
\[
k( x, y ) = \frac{1}{ |x|^{a} |x-y|^{d-(a+b)} |y|^{b} },\ \ x, y \in \mathbb{R}^{d},\  x \neq y.
\]
Then, integral operator
\[
(K\phi)(x) = \int_{\mathbb{R}^{d}} k( x, y ) \phi(y) dy
\]
is bounded in $ L^{p}(\mathbb{R}^{d}) $ if and only if
$ a< d/p $ and $ b<d/q $, where $ q = p/(p-1) $ is the dual exponent of $p$.
\end{theorem}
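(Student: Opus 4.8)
The plan is to prove both implications by exploiting the scaling structure of the kernel. The decisive observation is that $k$ is homogeneous of degree $-d$: a direct substitution gives $k(\lambda x,\lambda y)=\lambda^{-d}k(x,y)$ for $\lambda>0$, since the three exponents sum to $a+(d-(a+b))+b=d$. Consequently $K$ commutes with the $L^{p}$-isometric dilations $\phi\mapsto\lambda^{d/p}\phi(\lambda\,\cdot)$, and the objects adapted to this symmetry are the power functions $|x|^{-\mu}$. I would first record the averaging identity: for $\mu\in\mathbb{R}$ and $x\neq0$, the substitution $y=|x|z$ together with rotation invariance yields
\[
\int_{\mathbb{R}^{d}}k(x,y)\,|y|^{-\mu}\,dy=I(\mu)\,|x|^{-\mu},\qquad I(\mu):=\int_{\mathbb{R}^{d}}\frac{dz}{|e_{1}-z|^{\,d-(a+b)}\,|z|^{\,b+\mu}},
\]
with $e_{1}$ any fixed unit vector. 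A case analysis at $z=0$, $z=e_{1}$, and $z=\infty$ shows $I(\mu)<\infty$ precisely when $a<\mu<d-b$, the singularity at $z=e_{1}$ being integrable exactly because $a+b>0$.

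For the sufficiency direction I would run a weighted Schur test with the power weight $|y|^{-\sigma}$. Splitting $k=k^{1/q}k^{1/p}$ and applying H\"older's inequality with exponents $q,p$ gives the pointwise estimate $|(K\phi)(x)|\le\left(\int k(x,y)|y|^{-\sigma q}\,dy\right)^{1/q}\left(\int k(x,y)|y|^{\sigma p}|\phi(y)|^{p}\,dy\right)^{1/p}$. The first factor equals $I(\sigma q)^{1/q}|x|^{-\sigma}$ by the averaging identity (finite when $a<\sigma q<d-b$); raising to the $p$-th power, integrating in $x$ first, and using the analogous identity $\int k(x,y)|x|^{-\sigma p}\,dx=J(\sigma p)|y|^{-\sigma p}$, where $J$ is the same integral with $a$ and $b$ interchanged (finite when $b<\sigma p<d-a$), produces $\|K\phi\|_{p}\le I(\sigma q)^{1/q}J(\sigma p)^{1/p}\|\phi\|_{p}$. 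It then remains to choose $\sigma$ meeting $a<\sigma q<d-b$ and $b<\sigma p<d-a$ at once, i.e. to make the intervals $(a/q,(d-b)/q)$ and $(b/p,(d-a)/p)$ overlap. Here the relation $1/p+1/q=1$ does all the work: it turns $a/q<(d-a)/p$ into exactly $a<d/p$ and $b/p<(d-b)/q$ into exactly $b<d/q$, while forcing $a+b<d/p+d/q=d$, which makes each interval nonempty. Thus an admissible $\sigma$ exists precisely under the stated hypotheses, proving boundedness.

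For necessity it suffices, by duality, to show $a<d/p$ is forced: the adjoint $K^{*}$ is an operator of the same form with $a$ and $b$ interchanged acting boundedly on $L^{q}$, so the same statement applied to $K^{*}$ yields $b<d/q$. To see $a<d/p$ is necessary I would test on a fixed nonnegative $\phi\in C_{c}^{\infty}(\mathbb{R}^{d})$ supported in an annulus away from the origin. As $x\to0$, dominated convergence gives
\[
(K\phi)(x)=|x|^{-a}\int_{\mathbb{R}^{d}}|x-y|^{-(d-(a+b))}|y|^{-b}\phi(y)\,dy\longrightarrow |x|^{-a}\,C_{\phi},\qquad C_{\phi}=\int_{\mathbb{R}^{d}}|y|^{-(d-a)}\phi(y)\,dy\in(0,\infty),
\]
so $(K\phi)(x)\ge\tfrac12 C_{\phi}|x|^{-a}$ near $0$. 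Since $|x|^{-a}\notin L^{p}$ near the origin whenever $ap\ge d$, boundedness of $K$ forces $a<d/p$.

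The genuinely substantive step is the sufficiency, and within it the bookkeeping of the three convergence ranges defining $I$ and $J$ together with the interval-overlap condition for $\sigma$; the one clean idea that makes everything fit is that $1/p+1/q=1$ collapses the four endpoint inequalities into exactly the two hypotheses $a<d/p$ and $b<d/q$ (and automatically supplies $a+b<d$, guaranteeing the intervals are nonempty). No deeper obstacle appears beyond careful tracking of the strict inequalities; the necessity direction is comparatively soft, relying only on the near-origin asymptotics and the duality symmetry $a\leftrightarrow b$, $p\leftrightarrow q$.
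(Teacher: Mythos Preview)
Your argument is correct. The Schur test with the homogeneous weight $|y|^{-\sigma}$ is the standard route, and you have tracked the convergence windows for $I(\sigma q)$ and $J(\sigma p)$ accurately; the reduction of the overlap condition to precisely $a<d/p$ and $b<d/q$ via $1/p+1/q=1$ is clean. The necessity argument is also fine: testing on a nonnegative bump supported in an annulus forces $(K\phi)(x)\sim C_\phi|x|^{-a}$ near the origin, and duality handles the companion inequality. One cosmetic point: writing ``$(K\phi)(x)\longrightarrow|x|^{-a}C_\phi$'' as $x\to0$ is notationally awkward, since the right-hand side still depends on $x$; what you mean (and use) is $|x|^{a}(K\phi)(x)\to C_\phi$, hence the lower bound $(K\phi)(x)\ge\tfrac12C_\phi|x|^{-a}$ for small $|x|$.

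As for the comparison you asked about: the paper does \emph{not} prove this statement. It is quoted as a known lemma of Nirenberg and Walker, with a citation to \cite{NW}, and is used as a black box to establish Lemma~\ref{A}. So there is no proof in the paper against which to weigh yours; you have supplied a self-contained argument where the paper simply invokes the literature.
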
%%%%%%%%%%%%% End Lemma Nirenberg-Walker %%%%%%%%%%%%%%%%%%

For $ f = {}^{t}( f_{1}, f_{2}, f_{3}, f_{4} ) $, we define the integral operator $A$ by
\[
(Af)(x) = \frac{i}{4\pi} \int_{\mathbb{R}^{3}} \frac{\alpha \cdot (x-y)}{|x-y|^{3}} f(y) dy.
\]
Since
\[
\frac{i}{4\pi} \mathcal{F}^{-1} \left(\frac{\xi}{|\xi|^{3}}\right)(x)
= \frac{1}{(2\pi)^{\frac{3}{2}}} \frac{x}{|x|^{2}},
\]
it is obvious that
\begin{equation} \label{Q}
\mathcal{F}^{-1} (Af)(x) 
= \frac{\alpha \cdot x}{|x|^{2}} (\mathcal{F}^{-1}f)(x)
= (\alpha \cdot x)^{-1} (\mathcal{F}^{-1}f)(x)
\end{equation}

\begin{lemma} \label{A}%%%%%%%%% Lemma Boundedness of A %%%%%%%%%%%%%%%%
For any $ t \in ( -\frac{3}{2}, \frac{1}{2} ) $,
$ A \in B( \mathcal{L}^{2, -t}(\mathbb{R}^{3}), \mathcal{L}^{2, -t-1}(\mathbb{R}^{3}) ) $.
\end{lemma}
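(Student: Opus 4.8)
The plan is to deduce the estimate from the Nirenberg--Walker bound (Theorem~\ref{NNWW}) after recasting it as a weighted $L^2$ estimate. Since $(\alpha\cdot v)^2=|v|^2 I_4$ one has $|\alpha\cdot(x-y)|\le|x-y|$, so
\[
|(Af)(x)|\le\frac{1}{4\pi}\int_{\mathbb{R}^3}\frac{|f(y)|}{|x-y|^2}\,dy .
\]
Recalling $\mathcal{L}^{2,-t}=\langle x\rangle^{t}L^2$, the claim $A\in B(\mathcal{L}^{2,-t},\mathcal{L}^{2,-t-1})$ is equivalent to $\|\langle\cdot\rangle^{-t-1}Af\|_2\le C\|\langle\cdot\rangle^{-t}f\|_2$; writing $g=\langle\cdot\rangle^{-t}|f|\in L^2$ and using the display above, it suffices to prove that the positive integral operator $T$ on $L^2(\mathbb{R}^3)$ with kernel
\[
K(x,y)=\frac{\langle x\rangle^{-t-1}\langle y\rangle^{t}}{|x-y|^2}
\]
is bounded, because then $\langle x\rangle^{-t-1}|(Af)(x)|\le\frac{1}{4\pi}(Tg)(x)$.

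First I would split $K=\sum_{j=1}^4 K_j$ over the regions $R_1=\{|x|\le1,|y|\le1\}$, $R_2=\{|x|\le1,|y|>1\}$, $R_3=\{|x|>1,|y|\le1\}$ and $R_4=\{|x|>1,|y|>1\}$, using $\langle x\rangle\simeq1$ on $\{|x|\le1\}$ and $\langle x\rangle\simeq|x|$ on $\{|x|>1\}$ (and likewise in $y$). On $R_1$, $K_1\le C\,|x-y|^{-2}\mathbf 1_{\{|x-y|\le2\}}$ is convolution with an $L^1$ function, hence $L^2$-bounded by Young's inequality; the same bound (with $\mathbf 1_{\{|x-y|\le3\}}$) disposes of the bounded pieces $\{1<|y|<2\}$ of $R_2$ and $\{1<|x|<2\}$ of $R_3$. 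On the remaining part of $R_2$ one has $|x-y|\simeq|y|$, so $K_2\le C\,|y|^{t-2}\mathbf 1_{\{|x|\le1,\,|y|\ge2\}}$, a rank-one operator, bounded on $L^2$ precisely when $|y|^{t-2}\mathbf 1_{\{|y|\ge2\}}\in L^2(\mathbb{R}^3)$, i.e. $t<1/2$. Symmetrically, on the remaining part of $R_3$, $K_3\le C\,|x|^{-t-3}\mathbf 1_{\{|x|\ge2,\,|y|\le1\}}$ is rank-one and bounded precisely when $|x|^{-t-3}\mathbf 1_{\{|x|\ge2\}}\in L^2(\mathbb{R}^3)$, i.e. $t>-3/2$.

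The heart of the matter is $R_4$, where $\langle x\rangle\simeq|x|$ and $\langle y\rangle\simeq|y|$ give $0\le K_4(x,y)\le C\,|x|^{-(t+1)}|x-y|^{-2}|y|^{t}$; here I would invoke Theorem~\ref{NNWW} with $d=3$, $p=q=2$, $a=t+1$, $b=-t$, so that $a+b=1>0$ and $d-(a+b)=2$, while the remaining hypotheses $a<d/p=3/2$ and $b<d/q=3/2$ read exactly $t<1/2$ and $t>-3/2$. Thus the Nirenberg--Walker operator with kernel $|x|^{-(t+1)}|x-y|^{-2}|y|^{t}$ is bounded on $L^2(\mathbb{R}^3)$, and since the nonnegative kernel $K_4$ is dominated by it, $K_4$ is bounded too; summing the four pieces bounds $T$. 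The only real input is Nirenberg--Walker, needed exactly at $R_4$ where the non-integrable diagonal singularity $|x-y|^{-2}$ and the slow decay at infinity act together, and the main (if modest) obstacle is verifying that the exponents dictated by the weights fall in the admissible range, which turns out to be precisely the hypothesis $t\in(-3/2,1/2)$; the rest is elementary bookkeeping about when $\langle x\rangle\simeq|x|$ and peeling off bounded shells.
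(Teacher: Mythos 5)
Your proof is correct, and its core is the same as the paper's: reduce to the $L^2$-boundedness of the operator with kernel $\langle x\rangle^{-t-1}|x-y|^{-2}\langle y\rangle^{t}$ (up to the constant $1/4\pi$) and invoke Nirenberg--Walker (Theorem \ref{NNWW}) with exactly the same parameters $d=3$, $p=q=2$, $a=t+1$, $b=-t$, whose admissibility conditions reproduce $t\in(-\tfrac32,\tfrac12)$. The one genuine difference is your four-region decomposition, and it is not mere bookkeeping: Theorem \ref{NNWW} is stated for the homogeneous weights $|x|^{a}$, $|y|^{b}$, while the kernel at hand carries the Japanese brackets, and the pointwise domination $\langle x\rangle^{-(t+1)}\langle y\rangle^{t}\le C\,|x|^{-(t+1)}|y|^{t}$ fails globally once $t>0$ (near $y=0$) or $t<-1$ (near $x=0$). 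The paper applies Nirenberg--Walker to the bracketed kernel in one line without comment; your splitting into $\{|x|\le1\}$, $\{|y|\le1\}$ and the exterior region, with Young's inequality and rank-one (Hilbert--Schmidt) estimates on the near-origin pieces and Nirenberg--Walker only where $\langle\cdot\rangle\simeq|\cdot|$, supplies precisely the justification that step needs, at the cost of a slightly longer argument. So your route is the same in substance but more careful where the paper is terse.
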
%%%%%%%%%% End Lemma %%%%%%%%%%%%%%%

\begin{proof}%%%%%%%%%%%%% Proof of Lemma %%%%%%%%%%%%%%%%%%%%
The multiplication by $ \langle x \rangle^{t} $ is isomorphism
from $ \mathcal{L}^{2}(\mathbb{R}^{3}) $ onto $ \mathcal{L}^{2, -t}(\mathbb{R}^{3}) $.
It follows that $ A \in B( \mathcal{L}^{2, -t}(\mathbb{R}^{3}), \mathcal{L}^{2, -t-1}(\mathbb{R}^{3}) ) $
if and only if $ \langle x \rangle^{-t-1} A \langle x \rangle^{t} \in B( \mathcal{L}^{2}(\mathbb{R}^{3}) ) $.
The integral kernel of $ \langle x \rangle^{-t-1} A \langle x \rangle^{t} $ is bounded by
\[
\frac{1}{4\pi \langle x \rangle^{t+1} |x-y|^{2} \langle y \rangle^{-t} }.
\]
Lemma \ref{A} follows by applying Lemma \ref{NNWW} with
$ a = t+1,\ b = -t,\ d = 3,\ p = q =2 $.
\end{proof}%%%%%%%%%%%%%%%% End proof of Lemma %%%%%%%%%%%%%%%%%%%%%%

\begin{lemma} \label{R} %%%%%%%% Lemma Fourier transform %%%%%%%%%%%%%
Let $ -3/2 < s <1/2 $.
Then for any $ g \in \mathcal{L}^{2, -s}(\mathbb{R}^{3}) $
and $ \phi \in C_{0}^{\infty}(\mathbb{R}^{3} \backslash \{0\}, \mathbb{C}^{4}) $,
we have the identity;
\begin{equation} \label{QQ}
\langle \mathcal{F}^{-1}(Ag), \phi \rangle
= \langle \mathcal{F}^{-1}g, \frac{\alpha \cdot x}{|x|^{2}} \phi \rangle.
\end{equation}
\end{lemma}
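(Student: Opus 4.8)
The plan is to obtain (\ref{QQ}) from the pointwise formula (\ref{Q}) --- which holds literally for smooth, rapidly decaying functions --- combined with a density argument resting on the continuity of $A$ established in Lemma~\ref{A}. First I would confirm that both couplings in (\ref{QQ}) are well defined. Taking $t=s\in(-3/2,1/2)$, Lemma~\ref{A} gives $Ag\in\mathcal{L}^{2,-s-1}(\mathbb{R}^{3})$, hence $\mathcal{F}^{-1}(Ag)\in\mathcal{H}^{-s-1}(\mathbb{R}^{3})$, while $\mathcal{F}^{-1}g\in\mathcal{H}^{-s}(\mathbb{R}^{3})$; and since $0\notin\supp\phi$, both $\phi$ and $\frac{\alpha\cdot x}{|x|^{2}}\phi$ again belong to $C_{0}^{\infty}(\mathbb{R}^{3}\backslash\{0\},\mathbb{C}^{4})$, so in particular $\phi\in\mathcal{H}^{s+1}(\mathbb{R}^{3})$ and $\frac{\alpha\cdot x}{|x|^{2}}\phi\in\mathcal{H}^{s}(\mathbb{R}^{3})$. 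Thus the left-hand side of (\ref{QQ}) is the $\mathcal{H}^{-s-1}$--$\mathcal{H}^{s+1}$ duality pairing, the right-hand side the $\mathcal{H}^{-s}$--$\mathcal{H}^{s}$ pairing, and by Plancherel each coincides with the sesquilinear $\mathcal{L}^{2}$ inner product whenever both arguments lie in $\mathcal{L}^{2}(\mathbb{R}^{3})$.

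Next I would prove (\ref{QQ}) for $g\in\mathcal{S}(\mathbb{R}^{3},\mathbb{C}^{4})$. For such $g$ the right-hand side of (\ref{Q}) is a bona fide $\mathcal{L}^{2}(\mathbb{R}^{3})$ function and (\ref{Q}) holds as an equality of functions, so $\mathcal{F}^{-1}(Ag)(x)=\frac{\alpha\cdot x}{|x|^{2}}(\mathcal{F}^{-1}g)(x)$. Substituting this into the left-hand side of (\ref{QQ}), written now as an $\mathcal{L}^{2}$ inner product, and using that $\frac{\alpha\cdot x}{|x|^{2}}=(\alpha\cdot x)^{-1}$ is, for each $x\neq 0$, a Hermitian matrix (a real linear combination of the Hermitian matrices $\alpha_{1},\alpha_{2},\alpha_{3}$, divided by the positive scalar $|x|^{2}$), I can move the matrix factor from the first to the second argument of the pairing and arrive at precisely the right-hand side of (\ref{QQ}).

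To remove the smoothness assumption I would approximate. Choose $g_{n}\in\mathcal{S}(\mathbb{R}^{3},\mathbb{C}^{4})$ with $g_{n}\to g$ in $\mathcal{L}^{2,-s}(\mathbb{R}^{3})$. Then $\mathcal{F}^{-1}g_{n}\to\mathcal{F}^{-1}g$ in $\mathcal{H}^{-s}(\mathbb{R}^{3})$, and by the boundedness of $A$ in Lemma~\ref{A} we have $Ag_{n}\to Ag$ in $\mathcal{L}^{2,-s-1}(\mathbb{R}^{3})$, hence $\mathcal{F}^{-1}(Ag_{n})\to\mathcal{F}^{-1}(Ag)$ in $\mathcal{H}^{-s-1}(\mathbb{R}^{3})$. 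Since pairing against the fixed elements $\phi\in\mathcal{H}^{s+1}(\mathbb{R}^{3})$ and $\frac{\alpha\cdot x}{|x|^{2}}\phi\in\mathcal{H}^{s}(\mathbb{R}^{3})$ defines continuous functionals on $\mathcal{H}^{-s-1}(\mathbb{R}^{3})$ and $\mathcal{H}^{-s}(\mathbb{R}^{3})$, letting $n\to\infty$ in the identity already proved for each $g_{n}$ yields (\ref{QQ}) for $g$. (Alternatively, one could bypass the approximation by expanding the left-hand side as a double integral, justifying Fubini through the Nirenberg--Walker estimate (Theorem~\ref{NNWW}) together with the rapid decay of $\hat{\phi}$, and then recognizing the inner integral as $A\hat{\phi}$ and invoking (\ref{Q}) for $\hat{\phi}$.)

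The point I expect to demand the most care is not any individual computation but the bookkeeping of spaces and pairings: one has to verify that $\phi$ and $\frac{\alpha\cdot x}{|x|^{2}}\phi$ indeed lie in the duals of the spaces containing $\mathcal{F}^{-1}(Ag)$ and $\mathcal{F}^{-1}g$ --- which is exactly where the hypotheses $0\notin\supp\phi$ and $-3/2<s<1/2$ (needed to apply Lemma~\ref{A}) are used --- and that the Fourier-side definition of the coupling agrees with the $\mathcal{L}^{2}$ inner product on the Schwartz data handled in the middle step. After that, the matrix transfer and the limiting argument are routine.
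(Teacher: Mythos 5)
Your proposal is correct and follows essentially the same route as the paper: verify both pairings make sense via Lemma \ref{A} and the support condition on $\phi$, establish \refeq(QQ) for smooth approximants using \refeq(Q) and the Hermiticity of $\frac{\alpha\cdot x}{|x|^{2}}$, then pass to the limit using the continuity of $A$ and of the pairings against the fixed test functions. The only cosmetic differences (Schwartz rather than $C_{0}^{\infty}$ approximants, and pairing $\mathcal{F}^{-1}g$ in $\mathcal{H}^{-s}$ instead of $\mathcal{H}^{-s-1}$) do not affect the argument.
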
%%%%%%%%%%%% End Lemma %%%%%%%%%%%%%%%%%

\begin{proof}%%%%%%%%%% Proof Fourier transform %%%%%%%%%%%%%%%
We note that both $ \mathcal{F}^{-1}g \in \mathcal{H}^{-s-1}(\mathbb{R}^{3}) $
and
$ \mathcal{F}^{-1}(Ag) \in \mathcal{H}^{-s-1}(\mathbb{R}^{3}) $.
Indeed, the former is obvious by
$ g \in \mathcal{L}^{2, -s}(\mathbb{R}^{3}) \subset \mathcal{L}^{2, -s-1}(\mathbb{R}^{3}) $
and the latter follows since
$ Ag \in \mathcal{L}^{2, -s-1}(\mathbb{R}^{3}) $
by virtue of the assumption $ g \in \mathcal{L}^{2, -s}(\mathbb{R}^{3}) $, $ -\frac{3}{2}<s<\frac{1}{2} $
and Lemma \ref{A}.
Let $ \phi \in C_{0}^{\infty}( \mathbb{R}^{3} \backslash \{0\}, \mathbb{C}^{4} ) $.
Take a sequence $ g_{n} \in C_{0}^{\infty}( \mathbb{R}^{3}, \mathbb{C}^{4} ) $ such that
$ \lim_{n \to \infty} \| g_{n} - g \|_{\mathcal{L}^{2, -s}} = 0 $.
Since $A$ is continuous from $ \mathcal{L}^{2, -s}(\mathbb{R}^{3}) $
to $ \mathcal{L}^{2, -s-1}(\mathbb{R}^{3}) $
by virtue of Lemma \ref{A},
it follows that
\begin{align*}
\langle \mathcal{F}^{-1}(Ag), \phi \rangle
&= \lim_{n \to \infty} \langle \mathcal{F}^{-1}(Ag_{n}), \phi \rangle \\
&= \lim_{n \to \infty} \langle \frac{\alpha \cdot x}{|x|^{2}} \mathcal{F}^{-1}g_{n}, \phi \rangle \\
&= \lim_{n \to \infty} \langle \mathcal{F}^{-1}g_{n}, \frac{\alpha \cdot x}{|x|^{2}} \phi \rangle \\
&= \langle \mathcal{F}^{-1}g, \frac{\alpha \cdot x}{|x|^{2}} \phi \rangle.
\end{align*}

%\begin{align}
%\langle \mathcal{F}^{-1}(Ag), \phi \rangle &= \lim_{n \to \infty} \langle \mathcal{F}^{-1}(Ag_{n}), \phi \rangle = \lim_{n \to \infty} ( Ag_{n}, \mathcal{F}\phi )_{\mathcal{L}^{2}}  \notag \\
%&= \lim_{n \to \infty} \frac{i}{4\pi} \int_{\mathbb{R}^{3}} \int_{\mathbb{R}^{3}}
%\frac{\alpha \cdot ( x-y )}{| x-y |^{3}} g_{n}(y) dy \overline{ ( \mathcal{F}\phi )(x) } dx. \notag
%\end{align}
%After changing the order of integration in the right hand side,
%we may continue to rewrite it:
%\begin{align}
%\lim_{n \to \infty} ( g_{n}, A\mathcal{F}\phi )_{\mathcal{L}^{2}}
%&= \lim_{n \to \infty} ( \mathcal{F}^{-1}g_{n}, \mathcal{F}^{-1}(A\mathcal{F}\phi) )_{\mathcal{L}^{2}} \notag \\
%&= \lim_{n \to \infty} ( \mathcal{F}^{-1}g_{n}, \frac{\alpha \cdot x}{|x|^{2}} \phi )_{\mathcal{L}^{2}}
%\notag \\
%&= \langle \mathcal{F}^{-1}g, \frac{\alpha \cdot x}{|x|^{2}} \phi \rangle. \notag
%\end{align}
Here we used (\ref{Q}) in the second step and
that $\dfrac{\alpha \cdot x}{|x|^{2}} \phi \in C_{0}^{\infty}( \mathbb{R}^{3} \backslash \{0\}, \mathbb{C}^{4} ) $
in the final step. This completes the proof.
\end{proof}%%%%%%%%%%%% End proof Fourier transform %%%%%%%%%%%%%%%%

The following is an extension of Theorem 4.1 of \cite{YT} and
plays an important role in the proof of theorem.
\begin{lemma} \label{A2}%%%%%%%%%%% Relation A Lemma %%%%%%%%%%%%%%%
Suppose that $ f \in \mathcal{L}^{2, -3/2}(\mathbb{R}^{3}) $
and $ H_{0}f \in \mathcal{L}^{2, -s}(\mathbb{R}^{3}) $
for some $ s \in ( -\frac{3}{2}, \frac{1}{2} ) $.
Then, $f$ satisfies $ AH_{0}f = f $.
\end{lemma}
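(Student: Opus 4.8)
The plan is to show that $w:=AH_0f-f$ is a $\mathbb{C}^4$-valued polynomial and then to rule out, by a growth argument, that a nonzero such polynomial can lie in the weighted space $w$ is forced to occupy; this gives $w=0$, i.e.\ $AH_0f=f$. The heuristic is that, in the representation of (\ref{Q}), the operator $A$ is multiplication by $(\alpha\cdot x)^{-1}$ and $H_0$ is multiplication by $\alpha\cdot x$, so $AH_0$ ``ought to be'' the identity; the only obstruction is the singularity of these multipliers at the origin, which is why, a priori, $w$ can only be shown to be a polynomial. Note that $Q$ plays no role, so the argument is insensitive to $\rho$.

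First I would check that $w$ is a well-defined tempered distribution lying in $\mathcal{L}^{2,-3/2}(\mathbb{R}^3)$. Indeed $f\in\mathcal{L}^{2,-3/2}(\mathbb{R}^3)$ by hypothesis, and since $H_0f\in\mathcal{L}^{2,-s}(\mathbb{R}^3)$ with $s\in(-\tfrac32,\tfrac12)$, Lemma~\ref{A} applied with $t=s$ gives $AH_0f\in\mathcal{L}^{2,-s-1}(\mathbb{R}^3)$; as $s<\tfrac12$ we have $s+1<\tfrac32$, so $\mathcal{L}^{2,-s-1}(\mathbb{R}^3)\subset\mathcal{L}^{2,-3/2}(\mathbb{R}^3)$, and hence $w\in\mathcal{L}^{2,-3/2}(\mathbb{R}^3)\subset\mathcal{S}'(\mathbb{R}^3)$.

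The core step is to prove that $\mathcal{F}^{-1}w$ is supported at the origin. Since $\mathcal{F}^{-1}w\in\mathcal{H}^{-3/2}(\mathbb{R}^3)$ (as $\mathcal{F}(\mathcal{F}^{-1}w)=w\in\mathcal{L}^{2,-3/2}(\mathbb{R}^3)$), it pairs with every $\phi\in C_0^\infty(\mathbb{R}^3\setminus\{0\},\mathbb{C}^4)$, and it suffices to show $\langle\mathcal{F}^{-1}(AH_0f),\phi\rangle=\langle\mathcal{F}^{-1}f,\phi\rangle$ for all such $\phi$. Applying Lemma~\ref{R} with $g=H_0f\in\mathcal{L}^{2,-s}(\mathbb{R}^3)$ turns the left-hand side into $\langle\mathcal{F}^{-1}(H_0f),\tfrac{\alpha\cdot x}{|x|^2}\phi\rangle$. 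Writing $\mathcal{F}^{-1}(H_0f)=(\alpha\cdot x)\mathcal{F}^{-1}f$ — the inverse-Fourier-side form of $H_0=\alpha\cdot D$, multiplication of the tempered distribution $\mathcal{F}^{-1}f$ by the polynomial $\alpha\cdot x$ being legitimate — and using that $\alpha\cdot x$ is Hermitian and that $\tfrac{\alpha\cdot x}{|x|^2}\phi$ again belongs to $C_0^\infty(\mathbb{R}^3\setminus\{0\},\mathbb{C}^4)$, I would transfer the factor $\alpha\cdot x$ onto the test function and invoke $(\alpha\cdot x)^2=|x|^2I_4$ on $\supp\phi$ to obtain $\langle\mathcal{F}^{-1}f,\tfrac{(\alpha\cdot x)^2}{|x|^2}\phi\rangle=\langle\mathcal{F}^{-1}f,\phi\rangle$, as desired. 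Consequently $\supp\mathcal{F}^{-1}w\subset\{0\}$.

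To conclude, a tempered distribution supported at the origin is a finite linear combination of derivatives of $\delta_0$, so $w=\mathcal{F}(\mathcal{F}^{-1}w)$ is a $\mathbb{C}^4$-valued polynomial. But no nonzero polynomial $P$ lies in $\mathcal{L}^{2,-3/2}(\mathbb{R}^3)=\langle x\rangle^{3/2}L^2(\mathbb{R}^3)$: the integral $\int_{\mathbb{R}^3}\langle x\rangle^{-3}|P(x)|^2\,dx$ diverges, since in polar coordinates the top-degree homogeneous part of $|P|^2$ produces a divergent radial integral, and already a nonzero constant term yields $\int^\infty r^{-1}\,dr=\infty$. Since $w\in\mathcal{L}^{2,-3/2}(\mathbb{R}^3)$, we get $w=0$, i.e.\ $AH_0f=f$. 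I expect the support computation to be the main obstacle: the identity $AH_0=\mathrm{Id}$ genuinely fails on general tempered distributions because of the behaviour at $\xi=0$, and Lemma~\ref{R} is precisely the device that confines all manipulations to $\mathbb{R}^3\setminus\{0\}$; the hypothesis $f\in\mathcal{L}^{2,-3/2}(\mathbb{R}^3)$, rather than a space with slower decay, is then exactly what kills the surviving polynomial — a nonzero constant would persist in $\mathcal{L}^{2,-\tau}(\mathbb{R}^3)$ for every $\tau>3/2$.
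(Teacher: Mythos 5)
Your argument is correct, and its core computation is exactly the paper's: you pair against $\phi\in C_{0}^{\infty}(\mathbb{R}^{3}\setminus\{0\},\mathbb{C}^{4})$, apply Lemma \ref{R} with $g=H_{0}f$, use $\mathcal{F}^{-1}(H_{0}f)=(\alpha\cdot x)\mathcal{F}^{-1}f$, move the Hermitian factor $\alpha\cdot x$ onto the test function and invoke $(\alpha\cdot x)^{2}=|x|^{2}I_{4}$ away from the origin. Where you genuinely diverge is the endgame. The paper observes that both $\mathcal{F}^{-1}f$ and $\mathcal{F}^{-1}(AH_{0}f)$ lie in $\mathcal{H}^{-3/2}(\mathbb{R}^{3})$ and then quotes the density of $C_{0}^{\infty}(\mathbb{R}^{3}\setminus\{0\},\mathbb{C}^{4})$ in $\mathcal{H}^{s}(\mathbb{R}^{3})$ for $s\le d/2$ (here $s=3/2=d/2$, the borderline case), so that the identity tested on such $\phi$ immediately gives equality of the two distributions. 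You instead conclude only that $\mathcal{F}^{-1}(AH_{0}f-f)$ has support in $\{0\}$, hence that $w=AH_{0}f-f$ is a $\mathbb{C}^{4}$-valued polynomial, and then exclude a nonzero polynomial by the divergence of $\int\langle x\rangle^{-3}|P(x)|^{2}dx$, using $w\in\mathcal{L}^{2,-3/2}(\mathbb{R}^{3})$, which you correctly secure from Lemma \ref{A} and $s+1<3/2$. These two finishes are essentially dual to one another: the density statement the paper cites holds precisely because no nonzero distribution supported at the origin (a finite combination of derivatives of $\delta_{0}$) belongs to $\mathcal{H}^{-3/2}(\mathbb{R}^{3})$, and that is exactly what your Fourier/polynomial-growth computation verifies by hand. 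What your route buys is self-containedness — the quoted density theorem is replaced by the structure theorem for point-supported distributions plus an explicit integrability check — at the cost of a slightly longer argument; what the paper's route buys is brevity, given the cited fact. Both hinge on the exponent $3/2$ being critical: in your computation the constant term already fails to lie in $\mathcal{L}^{2,-3/2}(\mathbb{R}^{3})$ only by a logarithmic divergence, which is the precise reflection of the hypothesis $f\in\mathcal{L}^{2,-3/2}(\mathbb{R}^{3})$ being sharp.
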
%%%%%%%%%%%%%% End Lemma %%%%%%%%%%%%%%%%%%%

\begin{proof}%%%%%%%%%%%%%% proof of Lemma %%%%%%%%%%%%%%%%%%%%%%%
Since $ f \in \mathcal{L}^{2, -3/2}(\mathbb{R}^{3}) $
and $ AH_{0}f \in \mathcal{L}^{-s-1}(\mathbb{R}^{3}) \subset \mathcal{L}^{2, -3/2}(\mathbb{R}^{3}) $ by virtue of Lemma \ref{A},
it follows that $ \mathcal{F}^{-1}f, \mathcal{F}^{-1}(AH_{0}f) \in \mathcal{H}^{-3/2}(\mathbb{R}^{3}) $.
Thus, it suffice to show that
\begin{equation} \label{H}
\langle \mathcal{F}^{-1}(AH_{0}f), \phi \rangle = \langle \mathcal{F}^{-1}f, \phi \rangle,\ for\ any\ \phi \in \mathcal{H}^{3/2}(\mathbb{R}^{3}).
\end{equation}
Since $ C_{0}^{\infty}( \mathbb{R}^{d} \backslash \{0\}, \mathbb{C}^{4} ) $ is dense in $ \mathcal{H}^{s}(\mathbb{R}^{d}) $ for $ s \le d/2 $,
%( \S 4.5. Proposition 1. \cite{K} )
we need only prove (\ref{H}) for $ \phi \in C_{0}^{\infty}( \mathbb{R}^{3} \backslash \{0\}, \mathbb{C}^{4} ) $.
By setting $ g = H_{0}f $ in (\ref{QQ}) and using
$ \mathcal{F}^{-1}(H_{0}f)(x) = ( \alpha \cdot x ) ( \mathcal{F}^{-1}f )(x) $
for $ f \in \mathcal{L}^{2, -3/2}(\mathbb{R}^{3}) $,
we have
\begin{align}
\langle \mathcal{F}^{-1}(AH_{0}f), \phi \rangle 
&= \langle ( \alpha \cdot x ) \mathcal{F}^{-1}f,
\frac{\alpha \cdot x}{|x|^{2}} \phi \rangle   \notag \\
&= \langle \mathcal{F}^{-1}f, \frac{(\alpha \cdot x)^{2}}{|x|^{2}} \phi \rangle
= \langle \mathcal{F}^{-1}f, \phi \rangle. \notag
\end{align}
This completes the proof.
\end{proof}%%%%%%%%%%%%%% End proof of Lemma %%%%%%%%%%%%%%%%%%%%%%%%

%%%%%%%%%%%%%%%%%%%%%%%%%%% End Preliminaries %%%%%%%%%%%%%%%%%%%%%%%%%%%%%%%%

\section{Proof of Theorem \ref{main}}  %%%%%%%%%% proof of main theorem %%%%%%%%%%%%%
We may assume $ 1<\rho<3 $ without losing generality.
We apply well-known Agmon's bootstrap argument.
Let $ f \in \mathcal{L}^{2, -3/2}(\mathbb{R}^{3}) $ and $ Hf=0 $
in the sense of distributions.
Then $ H_{0}f = -Qf \in \mathcal{L}^{2, -\frac{3}{2} + \rho}(\mathbb{R}^{3}) $
by the assumption \ref{ass}.
Since $ -\frac{1}{2} < \rho-\frac{3}{2} <\frac{3}{2} $,
we have $ AQf \in \mathcal{L}^{2, -\frac{3}{2}+\rho-1}(\mathbb{R}^{3}) $
by virtue of Lemma \ref{A}.
Then Lemma \ref{A2} implies $ f = AH_{0}f = -AQf \in \mathcal{L}^{2, -\frac{3}{2} +\rho-1}(\mathbb{R}^{3}) $.
Thus we may repeat the argument several times and obtain
$ f \in \mathcal{L}^{2, -\frac{3}{2} +n(\rho-1)}(\mathbb{R}^{3}) $
as long as $ -\frac{3}{2} +n(\rho-1) +1 <\frac{3}{2} $.
Let $ n_{0} $ be the largest integer such that
$ -\frac{3}{2} + n_{0} ( \rho-1 ) + 1 < \frac{3}{2} $ so that
$ f \in \mathcal{L}^{2, -\frac{3}{2}+n_{0}(\rho-1)}(\mathbb{R}^{3}) $
and $ Qf \in \mathcal{L}^{2, -\frac{3}{2}+n_{0}(\rho-1)+\rho}(\mathbb{R}^{3}) $,
however $ -\frac{3}{2}+n_{0}(\rho-1)+\rho > \frac{3}{2} $.
Then for $ \mu < 1/2 $ arbitrary close to $ 1/2 $,
$ H_{0}f = -Qf \in \mathcal{L}^{2, \mu +1}(\mathbb{R}^{3}) $.
Thus, $ f \in \mathcal{L}^{2, \mu}(\mathbb{R}^{3}) $ by virtue of Lemma \ref{A} and Lemma \ref{A2}.
By differentiating, we have
\begin{align}
H_{0} \langle x \rangle^{\mu}f &= -i \mu ( \alpha \cdot x ) \langle x \rangle^{\mu -2}f
+ \langle x \rangle^{\mu} H_{0}f \notag  \\
&= -i \mu ( \alpha \cdot x ) \langle x \rangle^{\mu -2}f - \langle x \rangle^{\mu}Qf
\in \mathcal{L}^{2}(\mathbb{R}^{3}).  \notag
\end{align}
It follows that $ \mathcal{F}( \langle x \rangle^{\mu}f ) \in \mathcal{L}^{2, 1}(\mathbb{R}^{3}) $
which is equivalent to $ \langle x \rangle^{\mu}f \in \mathcal{H}^{1}(\mathbb{R}^{3}) $.
This completes the proof of Theorem \ref{main}.

%%%%%%%%%%%% End proof of main theorem %%%%%%%%%%%%%%%%%%%%%%%%%%%%%%%%%

%%%%%%%%%%%%%%%%%%%%%%%% Acknowledgment %%%%%%%%%%%%%%%%%%%%%%%%%%%%%%%%

\textbf{Acknowledgment}
I would like to express my sincere thanks to Professor Kenji Yajima for his unceasing encouragement and valuable advice.
%%%%%%%%%%%%%%%%%%%%%%%% End Acknowledgment %%%%%%%%%%%%%%%%%%%%%%%%%%%%%%%%

%%%%%%%%%%%%%% End references %%%%%%%%%%%%%%%%%%%%%%

\end{document}